\DeclareMathOperator{\diam}{diam}
\DeclareMathOperator{\rc}{rc}
\DeclareMathOperator{\src}{src}
\newcommand{\ProblemFormat}[1]{\textsc{#1}}
\newcommand{\ProblemName}[1]{\ProblemFormat{#1}\xspace}
\newcommand{\probkSRC}[0]{\ProblemName{$k$-SRC}}
\newcommand{\probkSubsetSRC}[0]{\ProblemName{$k$-SSRC}}
\newtheorem{theorem}{Theorem}
\newtheorem{lemma}[theorem]{Lemma}
\newtheorem{corollary}[theorem]{Corollary}
\newtheorem{conjecture}[theorem]{Conjecture}
\newcommand{\heading}[1]{\medskip\noindent{\bf #1.\ }}%
\author{Melissa Keranen\affiliationmark{1}
  \and Juho Lauri\affiliationmark{2}}
  \title{Computing minimum rainbow and strong rainbow colorings of block graphs}
\affiliation{
  % one line per affiliation, no postal codes, grant numbers or similar
  Michigan Technological University, USA\\
  Tampere University of Technology, Finland. Current affiliation: Nokia Bell Labs, Ireland}
\keywords{rainbow coloring, computational complexity, block graph}
\begin{document}
%\publicationdetails{VOL}{2015}{ISS}{NUM}{SUBM}
\publicationdetails{20}{2018}{1}{22}{3877}
\maketitle
\begin{abstract}
  A path in an edge-colored graph $G$ is \emph{rainbow} if no two edges of it are colored the same. The graph $G$ is \emph{rainbow-connected} if there is a rainbow path between every pair of vertices. If there is a rainbow shortest path between every pair of vertices, the graph $G$ is \emph{strongly rainbow-connected}.
The minimum number of colors needed to make~$G$ rainbow-connected is known as the \emph{rainbow connection number} of~$G$, and is denoted by $\rc(G)$. Similarly, the minimum number of colors needed to make~$G$ strongly rainbow-connected is known as the \emph{strong rainbow connection number} of~$G$, and is denoted by~$\src(G)$. 
We prove that for every~$k \geq 3$, deciding whether $\src(G) \leq k$ is $\NP$-complete for split graphs, which form a subclass of chordal graphs. Furthermore, there exists no polynomial-time algorithm for approximating the strong rainbow connection number of an $n$-vertex split graph with a factor of $n^{1/2-\epsilon}$ for any~$\epsilon > 0$ unless~$\P = \NP$. 
We then turn our attention to block graphs, which also form a subclass of chordal graphs.
We determine the strong rainbow connection number of block graphs, and show it can be computed in linear time.
Finally, we provide a polynomial-time characterization of bridgeless block graphs with rainbow connection number at most 4.
\end{abstract}

\section{Introduction}
Let $G$ be an edge-colored undirected graph that is simple and finite. A path in $G$ is \emph{rainbow} if no two edges of it are colored the same. The graph $G$ is \emph{rainbow-connected} if there is a rainbow path between every pair of vertices. If there is a rainbow shortest path between every pair of vertices, the graph $G$ is \emph{strongly rainbow-connected}. The minimum number of colors needed to make $G$ rainbow-connected is known as the \emph{rainbow connection number} of $G$ and is denoted by $\rc(G)$. Likewise, the minimum number of colors needed to make $G$ strongly rainbow-connected is known as the \emph{strong rainbow connection number} of $G$ and is denoted by $\src(G)$. Rainbow connectivity was introduced by Chartrand, Johns, McKeon, and Zhang~\cite{Chartrand2008} in 2008. While being a theoretically interesting way of strengthening connectivity, rainbow connectivity also has possible applications in data transfer and networking~\cite{Li2012}. The study of rainbow colorings and several of its variants have recently attracted increasing attention in the research community. For a comprehensive treatment, we refer the reader to the books~\cite{Chartrand2008b, Li2012b}, or the recent survey~\cite{Li2012}.

Denote by $n$ the number of vertices and by $m$ the number of edges of a graph in question. It is easy to verify that $\rc(G) \leq n - 1$; indeed, such an edge-coloring is obtained by coloring the edges of a spanning tree of $G$ in distinct colors. On the other hand, we will always need as at least as many colors as is the length of a longest shortest path in $G$. Thus, an easy lower bound for $\rc(G)$ is given by the diameter of $G$, denoted by $\diam(G)$. That is, we have that $\diam(G) \leq \rc(G) \leq n-1$. It also holds that $\rc(G) \leq \src(G)$, since every strongly rainbow-connected graph is also rainbow-connected.
For extremal cases, it is easy to see that $\rc(G) = \src(G) = 1$ if and only if $G$ is a complete graph. Similarly, we have that $\rc(G) = \src(G) = m$ if and only if $G$ is a tree (for proofs, see~\cite{Chartrand2008}). Chartrand~\emph{et al.}~\cite{Chartrand2008} also determined the exact rainbow and strong rainbow connection numbers for some structured graph classes, including cycles, wheel graphs, and complete multipartite graphs.

Not surprisingly, determining the rainbow connection numbers is computationally hard. Chakraborty, Fischer, Matsliah, and Yuster~\cite{Chakraborty2009} proved that given a graph $G$, it is $\NP$-complete to decide whether $\rc(G)=2$. Ananth, Nasre, and Sarpatwar~\cite{Ananth2011} further showed that for every $k \geq 3$, deciding whether $\rc(G) \leq k$ is $\NP$-complete. Using different ideas, a proof of hardness for every $k \geq 2$ is also given by Le and Tuza~\cite{Le:tech}. The hardness of computing the strong rainbow connection number was shown by Ananth {\em et al.}~\cite{Ananth2011} as well. In particular, they proved that for every $k \geq 3$, deciding whether $\src(G) \leq k$ is $\NP$-complete, even when $G$ is bipartite. Furthermore, as $\rc(G) = 2$ if and only if $\src(G) = 2$ (for a proof, see~\cite{Chartrand2008}), it follows that deciding whether $\src(G) \leq k$ is $\NP$-complete for every $k \geq 2$.

Because rainbow-connecting graphs optimally is hard in general, there has been interest in approximation algorithms and easier special cases. Basavaraju, Chandran, Rajendraprasad, and Ramaswamy~\cite{Basavaraju2012} presented approximation algorithms for computing the rainbow connection number with factors $(r+3)$ and $(d+3)$ respectively, where $r$ is the radius and $d$ the diameter of the input graph. Chandran and Rajendraprasad~\cite{Chandran2013} proved that there is no polynomial-time algorithm to rainbow-connect graphs with less than twice the optimum number of colors, unless $\P = \NP$. Ananth {\em et al.}~\cite{Ananth2011} showed that there is no polynomial time algorithm for approximating the strong rainbow connection number of an $n$-vertex graph with a factor of $n^{1/2-\epsilon}$, where $\epsilon > 0$ unless $\NP = \ZPP$.

There is a line of research studying rainbow connection on chordal graphs (see e.g.,~\cite{Basavaraju2012,Chandran2012,Chandran2013,Chandran2014}). 
In this regard, it is known to be $\NP$-complete to decide whether $\rc(G) \leq k$ for every $k \geq 2$ even when $G$ is chordal~\cite{Chandran2012,Chandran2014}.
Furthermore, the rainbow connection number of a chordal graph can not be approximated to a factor less than $5/4$ unless $\P = \NP$~\cite{Chandran2013}. 
Motivated by this result, there has been interest in a deeper investigation of the rainbow connection number of subclasses of chordal graphs.
Chandran, Rajendraprasad, and Tesa{\v{r}}~\cite{Chandran2014} showed that for split graphs, the problem of deciding whether $\rc(G) = k$ is $\NP$-complete for $k \in \{2,3\}$, and in $\P$ for all other values of $k$. 
Chandran and Rajendraprasad~\cite{Chandran2012} showed split graphs can be rainbow-connected in linear-time using at most one more color than the optimum. 
In the same paper, the authors also gave an exact linear-time algorithm for rainbow-connecting threshold graphs.
Furthermore, they noted that their result is apparently the first efficient algorithm for optimally rainbow-connecting any non-trivial subclass of graphs.
To the best of our knowledge, the complexity of strongly rainbow-connecting chordal graphs is an open question.
Moreover, we are not aware of any efficient exact algorithms for computing the strong rainbow connection number of a non-trivial subclass of graphs.

\heading{Our results}
We further investigate the rainbow and strong rainbow connection number of subclasses of chordal graphs.
We extend the known hardness results for computing the strong rainbow connection number by showing it is $\NP$-complete to decide whether a given split graph can be strongly rainbow-connected in $k$ colors, where $k \geq 3$. 
As a by-product of the proof, we obtain that there exists no polynomial-time algorithm for approximating the strong rainbow connection number of an $n$-vertex split graph with a factor of $n^{1/2-\epsilon}$ for any $\epsilon > 0$ unless $\P = \NP$. 
These negative results further motivate the investigation of tractable special cases. Indeed, we determine the strong rainbow connection number of block graphs, and show that any block graph can be strongly rainbow-connected optimally in linear time. 
Finally, we turn to the rainbow connection number, and characterize the bridgeless block graphs with rainbow connection number 2, 3, or 4.

\section{Preliminaries}
The graphs considered are connected, simple, and undirected. For graph-theoretic concepts not covered here, we refer the reader to~\citep{Diestel2005}. %We write $[n] = \{1,2,\ldots,n\}$ for a positive integer $n$.

Let $G=(V,E)$ be a graph. The \emph{diameter} of $G$, denoted by $\diam(G)$, is the length of a longest shortest path in $G$. The \emph{degree} of a vertex is the number of edges incident to it. The \emph{minimum degree} of $G$, denoted by $\delta(G)$, is the minimum of the degrees of all the vertices in $G$. If $G$ is obvious from the context, we may shorten $\delta(G)$ to $\delta$. Finally, a \emph{dominating set} is a subset $D \subseteq V$ of vertices such that every vertex in $V \setminus D$ is adjacent to at least one vertex in $D$. If $D$ induces a connected subgraph in $G$, we say $D$ is a \emph{connected dominating set}. The minimum size of a (connected) dominating set in $G$, denoted by ($\gamma_c(G)$) $\gamma(G)$, is known as the \emph{(connected) domination number} of $G$.

A \emph{chord} is an edge joining two non-consecutive vertices in a cycle. A graph is \emph{chordal} if every cycle of length 4 or more has a chord. Equivalently, a graph is chordal if it contains no induced cycle of length~4 or more. 
Let us introduce some subclasses of chordal graphs that are most central for this work. A \emph{split graph} is a graph whose vertex set can be partitioned into a clique and an independent set. A \emph{cut vertex} is a vertex whose removal will disconnect the graph. A \emph{biconnected graph} is a connected graph having no cut vertices. In a \emph{block graph}, every maximal biconnected component, known as a \emph{block}, is a clique. In a block graph $G$, different blocks intersect in at most one vertex, which is a cut vertex of $G$. In other words, every edge of $G$ lies in a unique block, and $G$ is the union of its blocks. A particular property of block graphs is that they are \emph{geodetic}, meaning there is exactly one shortest path between every pair of vertices (see e.g.,~\cite{Stemple1968}). Both split graphs and block graphs are chordal. 

The concept of separators is central to chordal graphs. A set $S \subseteq V$ disconnects a vertex $a$ from vertex $b$ in a graph $G$ if every path of $G$ between $a$ and $b$ contains a vertex from $S$. A non-empty set $S \subseteq V(G)$ is a \emph{minimal separator} of $G$ if there exists $a$ and $b$ such that $S$ disconnects $a$ from $b$ in $G$, and no proper subset of $S$ disconnects $a$ from $b$ in $G$. If we want to identify the vertices that $S$ disconnects, we may also refer to $S$ as a \emph{minimal $a$-$b$ separator}. 
For a more comprehensive treatment on chordal graphs, we refer the reader to~\citep{Blair1993}.

\section{Hardness of strongly rainbow-connecting split graphs}
\label{sec:hardness_src}
In this section, we show that deciding whether a split graph can be strongly rainbow-connected with $k \geq 3$ colors is $\NP$-complete. We remark that it follows from the work of Chandran~\emph{et al.}~\cite{Chandran2014} that the problem is $\NP$-complete for $k=2$; however, to the best of our knowledge, the complexity of the problem for $k \geq 3$ has been open even for chordal graphs.

In the \emph{$k$-subset strong rainbow connectivity problem} (\probkSubsetSRC), we are given a graph $G$, a set of pairs $P \subseteq V(G) \times V(G)$, and an integer $k$. The goal is to decide whether $E(G)$ can be colored with $k$ colors such that each pair of vertices in $P$ is connected by a rainbow shortest path. The problem was shown to be $\NP$-complete by Ananth {\em et al.}~\cite{Ananth2011} even when the graph $G$ is a star.
\begin{lemma}[\hspace{1sp}{\cite{Ananth2011}}]
For every $k \geq 3$, the \probkSubsetSRC problem is $\NP$-complete when the graph $G$ is a star.
\end{lemma}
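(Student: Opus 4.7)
The plan is to reduce from the classical $k$-Coloring problem, which is $\NP$-complete for every $k \geq 3$. Given an instance $(H,k)$ of $k$-Coloring, I would build a star $G$ with center $c$ and one leaf $\ell_v$ per vertex $v \in V(H)$, and take the required-pair set to be
\[
P = \{(\ell_u,\ell_v) : uv \in E(H)\},
\]
keeping the color budget at $k$. The reduction is clearly computable in linear time.

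The key observation driving correctness is that in a star the unique shortest path between any two distinct leaves $\ell_u$ and $\ell_v$ is the length-$2$ path $\ell_u \, c \, \ell_v$, which is rainbow iff the two edges $c\ell_u$ and $c\ell_v$ receive distinct colors. Consequently, after identifying each edge $c\ell_v$ of $G$ with its corresponding vertex $v$ of $H$, an edge $k$-coloring of $G$ that provides a rainbow shortest path for every pair in $P$ is \emph{exactly the same object} as a proper $k$-coloring of the vertices of $H$. Hence $(G,P,k)$ is a yes-instance of $\subsetsrc$ iff $H$ is $k$-colorable, which establishes $\NP$-hardness.

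Membership in $\NP$ is immediate: a candidate edge coloring serves as a polynomial-size certificate, and verifying it amounts, for each pair in $P$, to checking that two named edge colors differ (on the star), so the check runs in polynomial time. Together with the reduction above, this yields $\NP$-completeness of $\subsetsrc$ on stars for every fixed $k \geq 3$.

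There is no genuinely hard step here; the whole content of the argument is the one-to-one correspondence between edge colorings of a star and vertex colorings of the auxiliary graph whose edges are the required pairs. If anything, the only subtlety to state carefully is that on a star the shortest path between two leaves is \emph{unique}, so the rainbow-shortest-path condition cannot be satisfied by some alternative route and reduces cleanly to a constraint on just two edge colors.
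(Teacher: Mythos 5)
Your reduction is correct and is essentially the same construction used in the cited source: the paper itself notes in Section 2 that Ananth et al.\ obtained this lemma by reducing the chromatic number problem to $\subsetsrc$ on a star, which is exactly your identification of edge colorings of the star with vertex colorings of the auxiliary graph. Your observation that shortest paths between leaves of a star are unique, so each pair constraint collapses to requiring two edge colors to differ, is the whole content of the argument, and your handling of $\NP$ membership is fine.
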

We reduce from this problem, and make use of some ideas of~\cite{Chakraborty2009} in the following. For convenience, we denote by \probkSRC the problem of deciding whether a given a graph $G$ can be strongly rainbow-connected in $k$ colors.
\begin{theorem}
\label{thm_src_chordal_hardness}
For every integer $k \geq 3$, it is $\NP$-complete to decide if $\src(G) \leq k$, where $G$ is a split graph.
\end{theorem}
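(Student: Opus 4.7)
The plan is to establish membership in $\NP$ by a standard guess-and-verify: given a candidate edge colouring $\phi : E(G) \to [k]$, a BFS from each vertex together with a layered search suffices to check in polynomial time that every pair of vertices has some rainbow shortest path.

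For hardness I would reduce from the $\subsetsrc$ problem on stars, which is $\NP$-complete for every $k \geq 3$ by the preceding Lemma. Given an instance $(K_{1,n}, P, k)$ with centre $c$ and leaves $\ell_1, \ldots, \ell_n$, I construct a split graph $G$ whose clique is $K = \{\gamma\} \cup \{x_{ij} : \{i,j\} \in \binom{[n]}{2} \setminus P\}$ and whose independent set is $I = \{v_1, \ldots, v_n\}$, where $\gamma$ is adjacent to every $v_i$ and each $x_{ij}$ is adjacent in $I$ precisely to $v_i$ and $v_j$. The construction is polynomial, $G$ is split by definition, and it is straightforward to see that $G$ has diameter $2$.

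The forward direction is the clean half. For each pair $\{i,j\} \in P$ the vertex $x_{ij}$ is absent, so a routine inspection of neighbourhoods shows that $\gamma$ is the unique common neighbour of $v_i$ and $v_j$; hence $v_i - \gamma - v_j$ is the unique shortest $v_iv_j$-path, and any strong rainbow colouring $\phi$ of $G$ must have $\phi(\gamma v_i) \neq \phi(\gamma v_j)$. Setting $f(i) := \phi(\gamma v_i)$ then produces a valid $\subsetsrc$ solution.

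For the backward direction, given a valid $f : [n] \to [k]$ I would colour $\phi(\gamma v_i) = f(i)$, set $\phi(x_{ij} v_i) = 2$ and $\phi(x_{ij} v_j) = 3$ for $i < j$, and assign colour $1$ to every edge inside $K$, which is valid provided $k \geq 3$. Rainbow-connecting the non-adjacent pairs is immediate in most cases: for $(v_i, v_j)$ with $\{i,j\} \in P$ the unique $\gamma$-path is rainbow by the hypothesis on $f$, while for $\{i,j\} \notin P$ the $x_{ij}$-path uses the distinct colours $2$ and $3$. The subtle case is the distance-$2$ pair $(v_i, x_{kl})$ with $i \notin \{k,l\}$: the length-$2$ path through any $x_{ij'}$ with $\{i,j'\} \notin P$ uses colours in $\{2,3\}$ followed by $1$ and is therefore rainbow automatically, while the length-$2$ path through $\gamma$ uses colours $f(i)$ and $1$ and is rainbow only if $f(i) \neq 1$. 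The main obstacle, and the one place the argument is not automatic, is handling indices $i$ for which no $x_{ij'}$ exists, i.e.\ indices appearing in every pair of $P$. Such indices induce a clique in the graph $H = ([n], P)$, so either there are at most $k-1$ of them (in which case I permute the colours of $f$ so that none of them is coloured $1$) or there are exactly $k$ of them and $n = k$, in which case $G$ has no $x$-vertex at all and the troublesome constraint vanishes; any other scenario would contradict $k$-colourability of $H$ and so corresponds to a \emph{no}-instance of $\subsetsrc$, completing the equivalence.
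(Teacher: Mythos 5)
Your reduction is essentially the paper's: both reduce from $\subsetsrc$ on stars (the preceding Lemma) by keeping the leaves as the independent set, keeping the centre adjacent to all leaves, and adding one clique vertex $x_{ij}$ adjacent to $v_i$ and $v_j$ for each pair \emph{not} in $P$, so that pairs in $P$ retain a unique shortest path through the centre. The one structural difference is that the paper additionally adds a vertex $x_v$ for \emph{every} leaf $v$, adjacent to $v$ and to the centre and placed in the clique; this guarantees that each leaf has a rainbow length-two path into the clique (via $x_v$, using the fixed colours $c_1,c_2$) irrespective of how the star edges are coloured, which makes the backward direction immediate. Your leaner gadget omits these vertices, and you correctly identify the resulting gap -- leaves $v_i$ for which no $x_{ij'}$ exists must reach clique vertices through $\gamma$, forcing $f(i)\neq 1$ -- and your fix (such indices form a clique in $H=([n],P)$, so either a colour permutation avoids $1$ on all of them, or $H$ contains $K_{k+1}$ and the instance is a no-instance, or $n=k$ and the gadget degenerates to a star) is sound. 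So your proof trades a slightly larger construction for a slightly longer case analysis; both work. Two minor points: your $\NP$-membership argument is under-specified, since checking the existence of a rainbow shortest path between a pair is not just a BFS (it needs, e.g., a dynamic program over colour subsets, which is polynomial only because $k$ is a fixed constant, or the observation that split graphs have diameter at most $3$ so all shortest paths can be enumerated edge-by-edge); and the paper also records, as you implicitly should, that the construction size is independent of $k$, which is what later yields the inapproximability corollary.
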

\begin{proof}
Let $I = (S, P, k)$ be an instance of the \probkSubsetSRC problem, where $S=(V,E)$ is a star, both $p$ and $q$ in each $(p,q) \in P$ are leaves of $S$, and $k \geq 3$ is an integer. We construct an instance $I' = (G')$ of \probkSRC, where $G'=(V',E')$ is a split graph such that $I$ is a YES-instance of \probkSubsetSRC iff $I'$ is a YES-instance of \probkSRC.

\begin{figure}[t]
\centering
\includegraphics[scale=1]{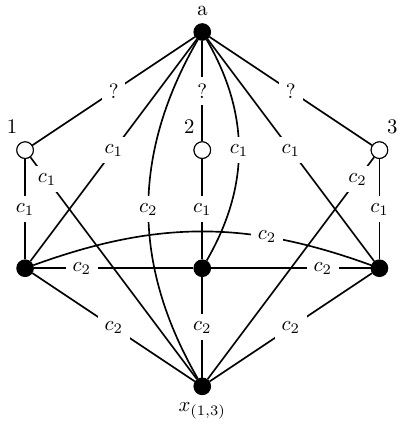}
\caption{A star graph $S$ on the vertex set $\{a,1,2,3\}$ transformed to a split graph $G'$ with $P = \{ (1,2),(2,3) \}$. The white vertices form an independent set while the black vertices form a clique. The symbol ?\ marks an edge-coloring $\chi$ of $S$ with $k$ colors under which the pairs in $P$ are connected by a rainbow path.}
\label{fig:reduction_chordal}
\end{figure}

Let $a$ be the central vertex of $S$. For every vertex $v \in V \setminus \{a\}$, we add a new vertex $x_v$, and for every pair of leaves $(u,v) \in (V \times V) \setminus P$, we add a new vertex $x_{(u,v)}$. Formally, we construct $G'=(V',E')$ such that
\begin{itemize}
\item $V' = V \cup \{ x_v \mid v \in V \setminus \{a\}\} \cup \{ x_{(u,v)} \mid (u,v) \in (V \times V) \setminus P\}$,
\item $E' = E \cup E_1 \cup E_2 \cup E_3$,
\item $E_1 = \{ (v,x_v), (a,x_v) \mid v \in V \setminus \{a\} \}$,
\item $E_2 = \{ (u,x_{(u,v)}), (v,x_{(u,v)}), (a,x_{(u,v)}) \mid (u,v) \in (V \times V) \setminus P \}$, and
\item $E_3 = \{ (x,x') \mid x,x' \in V' \setminus V \}$.
\end{itemize}
Let us then verify that $G'$ is a split graph. 
Observe the leaves of $S$ form an independent set in $G'$. 
The remaining vertices $\{a\} \cup (V' \setminus V)$ form a clique, proving $G'$ is split. Moreover, $a$ is a dominating vertex. 
An example illustrating the construction is given in Figure~\ref{fig:reduction_chordal}.

We will now prove $G'$ is strongly rainbow-connected with $k$ colors if and only if $(S, P)$ is $k$-subset strongly rainbow-connected. First, suppose $(S, P)$ is not $k$-subset strongly rainbow-connected; we will show $G'$ is not strongly rainbow-connected with $k$ colors. Observe that for each $(p,q) \in P$, there is a unique shortest path between~$p$ and~$q$ in~$S$. Moreover, the same holds for $G'$. Therefore, any strong rainbow coloring using $k$ colors must make this path strongly rainbow-connected in $G'$. But because the pairs in~$P$ cannot be strongly rainbow-connected with $k$ colors in~$S$, the graph~$G'$ cannot be strongly rainbow-connected with $k$ colors.

Finally, suppose $(S, P)$ is $k$-subset strongly rainbow-connected under some edge-coloring $\chi : E \to \{c_1,\ldots,c_k\}$. We will describe an edge-coloring $\chi'$ given to $G'$ by extending $\chi$. We retain the original coloring on the edges of $S$, that is, $\chi'(e) = \chi(e)$, for every $e \in E$. The rest of the edges are colored as follows:
\begin{itemize}
\item $\chi'(e) = c_1$, for all $e \in E_1$,
\item $\chi'(e) = c_2$, for all $e \in E_3$, and
\item $\chi'(ux_{(u,v)})=c_1$, $\chi'(vx_{(u,v)}) = c_2$, and $\chi'(ax_{(u,v)}) = c_2$ for all $(u,v) \notin P$.
\end{itemize}
It is straightforward to verify $G'$ is indeed strongly rainbow-connected under $\chi'$, completing the proof.~$\square$
\end{proof}

Ananth {\em et al.}~\cite{Ananth2011} reduced the problem of deciding whether a graph $G$ has chromatic number at most~$k$ to \probkSubsetSRC. 
Finally, they reduced \probkSubsetSRC to the problem of deciding whether a bipartite graph $G'$ can be strongly rainbow-connected in $k$ colors.
In this final step, the size of $G'$ is quadratic in the input graph~$G$ of the chromatic number instance.
Moreover, since the chromatic number of an $n$-vertex graph cannot be approximated with a factor of $n^{1-\epsilon}$ for any $\epsilon > 0$ unless $\P = \NP$~\cite{Zuckerman2006}, they obtained that $\src(G')$ cannot be approximated with a factor of $n^{1/2-\epsilon}$, under the same complexity-theoretic assumptions.
We apply precisely the same reasoning to the split graph~$G'$ obtained in Theorem~\ref{thm_src_chordal_hardness}, where $G'$ has size quadratic in~$G$ (similarly assuming a chain of reductions from an arbitrary instance $G$ of chromatic number).
We obtain the following.
\begin{theorem}
There is no polynomial-time algorithm that approximates the strong rainbow connection number of an $n$-vertex split graph with a factor of $n^{1/2-\epsilon}$ for any $\epsilon > 0$, unless $\P = \NP$.
\end{theorem}

\section{Strongly rainbow-connecting block graphs in linear time}
\label{sec:algorithm}
In this section, we determine exactly the strong rainbow connection number of block graphs.\footnote{We remark that the presentation given here is simpler than the one given in the doctoral thesis of the second author~\cite[Section~5.2]{Lauri-phd}.} Furthermore, we present an exact linear-time algorithm for constructing a strong rainbow coloring using $\src(G)$ colors for a given block graph $G$. If an explicit coloring is not required (i.e., if the value of $\src(G)$ suffices), the algorithm can be further simplified.

Let $B$ be a block in a block graph $G$ whose edges are colored by using colors from the set $R = \{c_1,\ldots,c_r\}$. Then we say that $B$ is \emph{colored} and $B$ is \emph{associated} with each color $c_1,\ldots,c_r$. 
In particular, if $B$ is associated with a color $c$ and no other block is associated with $c$, then we say $B$ is \emph{uniquely associated} with color $c$.
Furthermore, any color from $R$ can be used as a representative for the color of $C$. Thus we may say that $B$ has been colored $c_i$ for any $i \in \{ 1,\ldots,r \}$.
\begin{lemma}
\label{lemma_coloring}
Let $G$ be a block graph, let $B$ be a block that is uniquely associated with color $c$, let $(u,v)$ be an edge in $G$ such that $u,v \notin B$, and let $y$ be the minimal $a$-$b$ separator for any $a \in B \setminus \{y\}$ and $b \in \{u,v\}$. If no shortest $y$-$u$ path or shortest $y$-$v$ path contains $(u,v)$, then by coloring $(u,v)$ with the color $c$, any shortest path between $u$ or $v$ and $w \in B$ contains at most one edge of color $c$.
\end{lemma}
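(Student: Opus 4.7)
The plan is to combine two facts specific to block graphs: first, every block graph is geodetic (Theorem~\ref{thm_block_graph_is_geodetic}), so shortest paths are unique; second, because $G$ is a block graph and every minimal separator has size $1$, the vertex $y$ is a cut vertex lying in $C$ that separates $\{u,v\}$ from $C \setminus \{y\}$. Together these observations force a very rigid decomposition of any shortest path from $u$ or $v$ into $C$.

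First, I would fix an arbitrary $w \in C$ and analyse the unique shortest $u$-$w$ path $P_{u,w}$; the case of $v$ is completely symmetric. Since every $u$-$w$ walk must pass through $y$, uniqueness of shortest paths forces $P_{u,w}$ to be the concatenation of the unique shortest $u$-$y$ path $P_{u,y}$ with the unique shortest $y$-$w$ path $P_{y,w}$. Because $y, w \in C$ and $C$ is a clique (being a block of $G$), $P_{y,w}$ is either empty (when $y = w$) or consists of the single edge $(y,w)$. In particular, at most one edge of $P_{u,w}$ lies inside the block $C$.

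It then remains to rule out any further color-$c$ edge on $P_{u,w}$. The only edges whose color is $c$ are edges of $C$ (since $C$ is associated with $c$) together with the newly colored edge $(u,v)$. The subpath $P_{u,y}$ lies entirely outside $C$ save for the endpoint $y$, so none of its edges is an edge of $C$; and by hypothesis $P_{u,y}$ does not contain $(u,v)$. Hence the only potential color-$c$ edge on $P_{u,w}$ is the single $C$-edge $(y,w)$, yielding at most one such edge in total. The symmetric argument, using the hypothesis on shortest $y$-$v$ paths, handles shortest $v$-$w$ paths. The main conceptual step is recognising the forced decomposition of the shortest path through the cut vertex $y$; once that decomposition is in hand, the hypothesis on $(u,v)$ leaves no room for color $c$ to appear outside of the single edge $(y,w)$.
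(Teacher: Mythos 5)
Your proposal is correct and follows essentially the same route as the paper's (much terser) proof: decompose the shortest path through the cut vertex $y$, note that the $u$-$y$ (resp.\ $v$-$y$) portion avoids both the edges of $C$ and the edge $(u,v)$ and hence carries no color $c$, and observe that the remaining $y$-$w$ segment is at most a single edge of $C$. The extra justification you give via geodicity and the size-one separator is exactly what the paper leaves implicit.
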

\begin{proof}
Any shortest path between $u$ or $v$ and $y$ does not contain the edge $(u,v)$, and does not contain any edges in $B$, so these paths do not have any edges of color $c$. Any shortest path between $y$ and $w$ is just an edge of color $c$.~$\square$
\end{proof}

The algorithm for strongly rainbow-connecting a block graph is presented in Algorithm~\ref{alg:src_block}. 
Given a block graph $G$, the algorithm partitions the blocks of $G$ into two sets $\mathcal{V}_{<3}$ and $\mathcal{V}_{\geq 3}$ based on the number of cut vertices contained in each block.
That is, if a block contains less than~3 cut vertices, it is added to $\mathcal{V}_{<3}$.
Otherwise, it is added to $\mathcal{V}_{\geq 3}$. 
Then, for each block in $\mathcal{V}_{<3}$, we introduce a new color and use it to color the edges of the block.
At the final step the algorithm goes through every block $B \in \mathcal{V}_{\geq 3}$.
Denote by $C(B)$ the set of cut vertices in $B$.
Fix~3 distinct vertices $b_1$, $b_2$, and $b_3$ in $C(B)$. 
Observe that in $G \setminus E(B)$, we would have at least 3 connected components, and $b_1$, $b_2$, and $b_3$ would be in different connected components. 
Suppose $E(B)$ was removed, and from each connected component $b_1$, $b_2$, and $b_3$ is in, pick a block in $\mathcal{V}_{<3}$. 
The picked three blocks are each associated with a distinct color. 
These colors are then used to color the edges of the block $B$.
The algorithm is illustrated in Figure~\ref{fig:src-alg-ex}: note that there are several choices of how the blocks in $\mathcal{V}_{\geq 3}$ are colored in the example, and the illustration shows one possibility.

\begin{figure}[t]
  \begin{center}
    \subfigure[]{\includegraphics[width=0.475\textwidth,keepaspectratio]{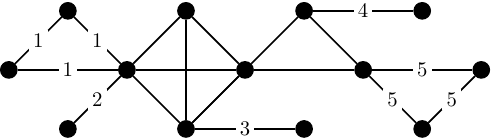}}
    \hfill
    \subfigure[]{\includegraphics[width=0.475\textwidth,keepaspectratio]{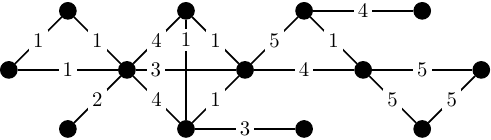}}
    \caption{\textbf{(a)} Lines 1 to 5 of Algorithm~\ref{alg:src_block} have been executed on an input block graph. \textbf{(b)} A strong rainbow coloring of~$G$ obtained after the execution of Algorithm~\ref{alg:src_block}.}
    \label{fig:src-alg-ex}
  \end{center}
\end{figure}

\begin{algorithm}[t]
  
  \caption{Algorithm for strong rainbow coloring a block graph}
  \label{alg:src_block}

  \begin{algorithmic}[1]
  \Require A block graph $G$
  \Ensure A strong rainbow coloring of $G$
  
    \State $\mathcal{V}_{<3} := \{ U \mid U \in \mathcal{B}(G) \wedge |C(U)| < 3 \}$ \Comment{Denote by $\mathcal{B}(G)$ the blocks of $G$}
    \State $\mathcal{V}_{\geq 3} := \mathcal{B}(G) \setminus \mathcal{V}_{<3}$
    
    \ForAll{$U \in \mathcal{V}_{<3}$}
    	\State Color edges in $U$ with a fresh distinct color
    \EndFor
    
   \ForAll{$B \in \mathcal{V}_{\geq 3}$}
    \State Let $b_1$, $b_2$, $b_3$ be distinct cut vertices in $C(B)$
    \State Assume $E(B)$ was removed from $E(G)$
    \State From each connected component $C_1$, $C_2$, $C_3$ is in, find a block in $\mathcal{V}_{<3}$
    \State Let $c_1,c_2,c_3$ be the respective colors associated with the found blocks
    \State Color all edges not incident to $b_1$ with color $c_1$
	\State Color all edges incident to $b_1$, except $(b_1,b_2)$, with color $c_2$
	\State Color the edge $(b_1,b_2)$ with color $c_3$
   \EndFor
	\end{algorithmic}
\end{algorithm}

The correctness of Algorithm~\ref{alg:src_block} is established by an invariant, which says that we always maintain the property that if the shortest path between two vertices is colored, then it is rainbow. We refer to this property as the \emph{shortest rainbow path property}.

\begin{theorem}
\label{thm_alg_correctness}
At every step, Algorithm~\ref{alg:src_block} maintains the shortest rainbow path property. 
\end{theorem}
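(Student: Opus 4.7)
The plan is to prove the invariant by induction on the number of blocks the algorithm has processed. Before any edge is colored the invariant holds vacuously. When the first loop paints every edge of some block $U \in V_{<3}$ with a fresh color $c$, the invariant survives because by Theorem~\ref{thm_block_graph_is_geodetic} every shortest path in a block graph is unique and crosses each clique in at most one edge; no previously colored edge carries $c$, and no other color class is disturbed.

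The substantive work is in the second loop. Suppose the algorithm is about to process some $C_j \in V_{\geq 3}$ with colors $c_1,c_2,c_3$, where each $c_i$ was first attached in the first loop to a block $U_i$ lying in the component of the clique tree reached from $C_j$ through the separator $x_i$. I would first isolate the following structural property of the coloring rule: for every $i \in \{1,2,3\}$, no edge of $C_j$ that receives color $c_i$ is incident to $x_i$. This is immediate for $c_1$ by construction; for $c_2$ because every such edge has the form $(x_1,z)$ with $z\neq x_2$; and for $c_3$ because its sole edge $(x_1,x_2)$ is disjoint from $x_3$. I would then argue that if a shortest path $P$ uses a newly colored $c$-edge $e$ of $C_j$, then, since $C_j$ is a clique and $P$ is geodesic, $P$ enters and exits $C_j$ through the endpoints of $e$, neither of which is the separator $x$ toward the original $c$-owning block $U$. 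Combining Lemma~\ref{lemma_coloring} with the uniqueness of shortest paths, $P$ cannot cross $x$ without re-entering $C_j$, so $P$ meets no $c$-edge of $U$.

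It remains to rule out that $P$ uses a second $c$-edge in a previously processed block $C_{j'} \in V_{\geq 3}$ that also reused the color $c$. If $C_{j'}$ lies in the $x$-component of $T - C_j$ then the same obstruction applies directly, while otherwise $C_j$ sits on the clique-tree path from $C_{j'}$ to $U$, so the structural property at $C_{j'}$ forces its $c$-edges to avoid precisely the cut vertex through which $P$ must enter $C_{j'}$ from the $C_j$-side, contradicting the assumption that $P$'s edge inside $C_{j'}$ has color $c$. The main obstacle I anticipate is phrasing the structural property and the ensuing separator-avoidance argument uniformly enough that the same case analysis handles each of $c_1,c_2,c_3$ together with every placement of $C_{j'}$ in the clique tree relative to $C_j$ and $U$; once that uniform statement is in place, the rest of the proof reduces to the familiar fact that in a geodetic block graph a shortest path visits each block at most once and must cross a given cut vertex in order to reach anything on its far side.
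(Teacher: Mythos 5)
Your proposal follows essentially the same route as the paper: induction over the two loops, the observation that every edge of $C_j$ receiving color $c_i$ avoids the separator $x_i$ toward the block that owns $c_i$, and an appeal to Lemma~\ref{lemma_coloring} together with the geodetic property of block graphs (Theorem~\ref{thm_block_graph_is_geodetic}). The one place you go beyond the paper's own argument is the explicit check that a shortest path cannot pick up the same color $c$ in two different blocks of $V_{\geq 3}$ that both reused $c$; the paper leaves this implicit, invoking Lemma~\ref{lemma_coloring} only for paths ending inside the $c$-owning block, so your separator-avoidance case analysis is a correct tightening of the same approach rather than a genuinely different one.
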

\begin{proof}
Before the execution of the first loop, nothing is colored so the claim is trivially true. Furthermore, the first loop obviously maintains the property. To see this, consider any shortest path of length $\ell \geq 1$ at any step. The path consists of $\ell$ edges that are in $\ell$ distinct blocks. Since each colored block has received a distinct color, the shortest path is rainbow. This establishes the base step for the correctness of the second loop.

Assume after iteration $i-1$ of the second loop, if the shortest path between any two vertices is colored, then it is rainbow. 
We show that this property is maintained after iteration $i$ of the second loop.
Consider any edge $(u,v)$ in $B$ not incident to $b_1$, and let $y \in C_1$ be the minimal $a$-$b$ separator for any $a \in C_1 \setminus \{y\}$ and $b \in \{u,v\}$.
The algorithm states that $(u,v)$ will be colored with color $c_1$, which is uniquely associated with a block in $C_1$.
Because $u$ and $v$ are both at a distance 1 from $b_1$, it follows that neither shortest path $y$-$u$ or $y$-$v$ contains $(u,v)$. Thus by Lemma~\ref{lemma_coloring}, if the shortest $w$-$u$ path, for $w \in C_1$ is colored, then it is rainbow. (The same is true for the shortest $w$-$v$ path). Therefore, by coloring $(u,v)$ with color $c_1$, the shortest rainbow path property is maintained.

Consider any edge $(u,v)$ in $B$ not incident to $b_2$, and let $y \in C_2$ be the minimal $a$-$b$ separator for any $a \in C_2 \setminus \{y\}$ and $b \in \{u,v\}$.
By Lemma~\ref{lemma_coloring}, this edge can be colored with color $c_2$ to maintain the shortest rainbow path property. 
Notice that $u$ and $v$ are both at a distance 1 from $b_2$, and because all edges not incident to $b_1$ have been colored, it follows that $b_1$ must be one of these vertices (i.e., either $u=b_1$ or $v=b_1$). 
So we conclude that every edge incident to $b_1$, except $(b_1,b_2)$, can be colored with $c_2$ to maintain the shortest rainbow path property.

Now the only uncolored edge in $C$ is the edge $(b_1,b_2)$. Because $b_1$ and $b_2$ are both at a distance 1 from $b_3$, Lemma~\ref{lemma_coloring} assures us that by coloring $(b_1,b_2)$ with color $c_3$, the shortest rainbow path property is maintained.~$\square$
\end{proof}

Let us then consider the complexity of Algorithm~\ref{alg:src_block}. 
It is an easy observation that lines 1 to 5 take linear time. 
Observe that on line 9, we essentially perform reachability queries of the form \emph{given a block $B \in \mathcal{V}_{\geq 3}$, return a block containing less than 3 cut vertices that is reachable from cut vertex $b \in B$ with a path containing no cut vertices of $B$ besides $b$}.
In our context, such a query is performed for each cut vertex $b_1$, $b_2$, and $b_3$.
The naive way of answering such queries is to start a depth-first search (DFS) from each $b_1$, $b_2$, and $b_3$, and halt when a suitable block is found. However, such implementation requires $\Omega(d)$ time, where $d$ is the diameter of the input graph $G$. Using elementary techniques, we can preprocess the block graph $G$ before the execution of line 1 using linear time to answer such queries in $O(1)$ time. Thus, the total runtime will be linear as the for-loop on line 6 loops $O(n)$ times.

\begin{theorem}
Algorithm~\ref{alg:src_block} constructs a strong rainbow coloring in $O(n+m)$ time.
\end{theorem}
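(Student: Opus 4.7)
The plan is to dispatch correctness first, then focus on the runtime analysis where the interesting bookkeeping happens. For correctness, Theorem~\ref{thm_alg_correctness} has already established that the shortest rainbow path invariant is preserved throughout execution, so it suffices to show that every edge of $G$ is eventually colored. Each block of $G$ corresponds to exactly one vertex $C \in V(T)$, and by Corollary~\ref{corollary_ldeg_preserved} the partition $V_{<3}, V_{\geq 3}$ is independent of the particular clique tree $T$ picked on line~1; each $C$ is handled in exactly one of the two for-loops, so all edges receive a color. Combined with Theorem~\ref{thm_block_graph_is_geodetic}, which guarantees a unique shortest path between any pair of vertices, this yields that every shortest path is rainbow colored.

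For the runtime, lines~1--3 are linear: the clique tree $T$ is constructed in $O(n+m)$ by Lemma~\ref{lemma_clique_tree_computed_in_linear_time}, and the labeled degrees partitioning $V(T)$ into $V_{<3}$ and $V_{\geq 3}$ can be computed by scanning, for each $C \in V(T)$, the labels on its incident edges (each label is a single cut vertex since $G$ is a block graph), spending $O(\deg_T(C))$ per vertex and $O(n)$ in total. The first for-loop colors each edge of $G$ once, contributing $O(m)$. The second for-loop iterates at most $n$ times; per iteration, choosing three distinct neighbors in $N_\lambda(C_j)$ takes $O(\deg_T(C_j))$, which sums to $O(n)$, and the edge-coloring on lines~13--15 is paid for by the global $O(m)$ edge budget.

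The main obstacle is answering, in $O(1)$ time per call, the implicit reachability queries on lines~10--11: given an edge $(C_j, C_i)$ of $T$, return a witness in $V_{<3}$ reachable from $C_i$ in $T \setminus \{C_j\}$. I would root $T$ at an arbitrary vertex $r$ and run two linear-time DFSs. The bottom-up pass computes, for each $v$, a value $\mathrm{down}(v) \in V_{<3}$ lying in the subtree rooted at $v$, defined by $\mathrm{down}(v) = v$ if $v \in V_{<3}$ and otherwise inherited from any child whose $\mathrm{down}$-value is defined. The top-down pass computes, for each non-root $v$ with parent $p$, a value $\mathrm{up}(v) \in V_{<3}$ lying outside the subtree of $v$, set to $p$ if $p \in V_{<3}$, else to $\mathrm{up}(p)$ if defined, else to $\mathrm{down}(w)$ for some sibling $w$ of $v$. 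The existence of such witnesses is guaranteed because every leaf of $T$ has labeled degree $1 < 3$, so every non-empty component of $T \setminus \{C_j\}$ contains at least one element of $V_{<3}$. A query on $(C_j, C_i)$ is then answered by $\mathrm{down}(C_i)$ when $C_i$ is a child of $C_j$ in the rooted tree and by $\mathrm{up}(C_j)$ when $C_i$ is its parent. With this table built in $O(n)$ time, every reachability query becomes a single table lookup, giving the claimed $O(n+m)$ total runtime.
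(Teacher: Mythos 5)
Your proof is correct and follows essentially the same route as the paper: lines 1--6 are handled in linear time, and the reachability queries on lines 10--11 are reduced to $O(1)$ lookups after a linear-time preprocessing of the clique tree, with the second loop iterating $O(n)$ times. The paper merely asserts that this preprocessing is possible ``using elementary techniques''; your two-pass $\mathrm{down}/\mathrm{up}$ construction is a valid instantiation of exactly that step, and your observation that every leaf of $T$ has labeled degree at most $1$ and hence lies in $V_{<3}$ correctly guarantees that the required witnesses exist.
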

As Algorithm~\ref{alg:src_block} is correct and uses $k$ colors where $k$ is the number of blocks containing less than 3 cut vertices, we establish that $\src(G) \leq k$.
In the following, we prove that this is in fact optimal by showing a matching lower bound.
\begin{lemma}
\label{lem:src_lower_bound}
Let $G$ be a block graph, and let $k$ be the number of blocks containing less than 3 cut vertices. Then, $\src(G) \geq k$.
\end{lemma}
\begin{proof}
Let $A$ be a set of $k$ edges in $G$, one from each block containing less than 3 cut vertices, selected as follows.
For each block $B \in \mathcal{B}(G)$, if $|C(B)| = 1$, pick an edge incident to the cut vertex.
On the other hand, if $|C(B)| = 2$, pick the edge connecting the two cut vertices.
We claim that if we are to strongly rainbow-connect $G$, then the edges in $A$ must all receive distinct colors.

Suppose there are 2 edges in $A$ that are of the same color, say $(u,x) \in E(B)$ and $(v,y) \in E(B')$. 
Without loss, we may assume that $u$ and $v$ are cut vertices of $B$ and $B'$, respectively, such that $d(u,v)$ is minimized.
As $G$ is geodetic the $x$-$y$ shortest path is unique, and it contains two edges of the same color.~$\square$
\end{proof}

\noindent By combining the previous lemma with Theorem~\ref{thm_alg_correctness}, we arrive at the following.
\begin{theorem}
\label{thm_src_equals_k}
Let $G$ be a block graph, and let $k$ be the number of blocks containing less than 3 cut vertices. Then, $\src(G) = k$.
\end{theorem}
If an explicit coloring is not required, then it is easy to see that there is a linear-time algorithm for computing $\src(G)$, where $G$ is a block graph. This is obtained by counting the number of blocks containing less than 3 cut vertices.
\begin{corollary}
There is an algorithm such that given a block graph $G$, it computes $\src(G)$ in $O(n+m)$ time.
\end{corollary}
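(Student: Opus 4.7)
The plan is to avoid constructing an explicit coloring and simply invoke Theorem~\ref{thm_src_equals_k}, which reduces computing $\src(G)$ to counting the vertices of labeled degree less than~3 in \emph{any} clique tree of $G$. By Corollary~\ref{corollary_ldeg_preserved}, the labeled degree is an invariant of $G$ (not of the particular tree chosen), so any clique tree suffices.

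Concretely, I would proceed in three steps. First, compute a clique tree $T$ of $G$ in $O(n+m)$ time by Lemma~\ref{lemma_clique_tree_computed_in_linear_time}; along the way, label each tree edge $(C_i,C_j)$ by the single vertex of $C_i \cap C_j$ (recall that in a block graph every minimal separator is a cut vertex, so every label is a single vertex). Second, for each node $C \in V(T)$ compute $\lambda_{\deg}(C)$, which in a block graph equals the number of distinct cut vertices of $G$ appearing as labels on edges of $T$ incident to $C$. Third, count the nodes with $\lambda_{\deg}(C) < 3$ and output this count as $\src(G)$.

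The only step requiring any thought is the second one, and even it is routine. Traversing $T$ once and iterating over each node's incident edges visits $O(|V(T)|) = O(n)$ edges in total (since $T$ is a tree on at most $n$ vertices). To count \emph{distinct} labels per node in $O(1)$ amortized time per edge, use a global marker array indexed by the vertices of $G$: when processing a node $C$, stamp each label encountered with a fresh timestamp associated with $C$, incrementing a counter only when the stamp is new; this resets implicitly for the next node because the timestamp changes, avoiding any explicit clearing. Total work across all nodes is proportional to the number of tree edges, hence $O(n)$. Adding the $O(n+m)$ cost of Step~1 gives the claimed bound.

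The main (and essentially only) obstacle is philosophical rather than technical: one must be convinced that no coloring needs to be produced, and that the labeled degree captured by a single arbitrary clique tree suffices. Both points are already handled, by Theorem~\ref{thm_src_equals_k} and Corollary~\ref{corollary_ldeg_preserved} respectively, so the corollary follows immediately once the linear-time counting is in place.
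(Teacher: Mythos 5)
Your proposal is correct and follows exactly the paper's own argument: compute a clique tree and count the vertices of labeled degree less than~3, justified by Theorem~\ref{thm_src_equals_k} and Corollary~\ref{corollary_ldeg_preserved}. The timestamp trick for counting distinct labels is a sound implementation detail the paper leaves implicit.
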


\section{On the rainbow connection number of block graphs}
\label{sec:block_rc}
In this section, we consider the rainbow connection number of block graphs. As a main result of the section, we prove a polynomial-time characterization of bridgeless block graphs with rainbow connection number at most~4.

Using known results, we begin by observing a tight linear-time computable upper bound on the rainbow connection number of a block graph of minimum degree at least~2. The following result was obtained by Chandran~\emph{et al.}~\cite{Chandran2011}.
\begin{theorem}[\hspace{1sp}{\cite{Chandran2011}}]
For every connected graph $G$, with $\delta(G) \geq~2$,
\begin{equation*}
\rc(G) \leq \gamma_c(G) + 2.
\end{equation*}
\end{theorem}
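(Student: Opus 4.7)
My plan is to use the minimum connected dominating set $D$ of $G$ as a ``backbone'' through which to route rainbow paths. Since $G[D]$ is connected, I would extract a spanning tree $T$ of $G[D]$ and give its $|D|-1 = \gamma_c(G) - 1$ edges pairwise distinct colors from a palette $\{c_1, \ldots, c_{\gamma_c(G)-1}\}$; any additional chords of $G[D]$ may reuse these same colors arbitrarily. Three further colors, call them $\alpha, \beta, \gamma$, will be reserved for edges incident to $V \setminus D$, yielding a total of $\gamma_c(G) + 2$ colors as required. The role of $T$ is to guarantee rainbow connectivity inside $D$ for free, while the three extra colors are used to ``attach'' vertices outside $D$ to this backbone.

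For the coloring of the edges incident to $V \setminus D$, I would fix for each vertex $v \notin D$ a \emph{pivot} $d(v) \in N(v) \cap D$ (which exists by the dominating property) and color the edge $(v, d(v))$ with $\alpha$. Because $\delta(G) \geq 2$, every such $v$ has a second incident edge, and I would color one such edge with $\beta$, preferring a second edge into $D$ whenever available. All remaining edges touching $V \setminus D$ receive color $\gamma$. Rainbow connectivity is then checked case by case: pairs inside $D$ use the tree path in $T$, which is rainbow by construction; a pair with one endpoint $u \in V \setminus D$ and the other in $D$ uses the $\alpha$-colored edge $(u,d(u))$ concatenated with the tree path, which is rainbow since $\alpha$ is disjoint from the tree palette; and for a pair $u, v \in V \setminus D$ we route from $u$ through an edge of color $\alpha$ or $\beta$ into $D$, then along the tree, then back out to $v$ through an edge of the other color, arranging endpoints so no color repeats.

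I expect the main obstacle to be the subcase in which both $u$ and $v$ lie in $V \setminus D$ and each has only a single neighbor in $D$, so that their $\beta$-edges necessarily land inside $V \setminus D$ rather than inside $D$. In that situation a naive concatenation would reuse color $\alpha$ at both endpoints and break the rainbow property, and it is precisely here that the third reserved color $\gamma$ must be exploited: by taking one step from $u$ along a $\beta$- or $\gamma$-edge to an intermediate non-$D$ vertex whose pivot differs from $d(v)$, one can reach $D$ via a color disjoint from $\alpha$, then traverse the tree to $d(v)$, and finally use the $\alpha$-edge to arrive at $v$. Verifying that this rerouting is always available boils down to a short case analysis on the number of $D$-neighbors of $u$ and of $v$ together with the observation that $\delta(G) \geq 2$ forbids isolated escape routes; the palette $\{\alpha, \beta, \gamma\}$ combined with the tree colors will then be shown to suffice, yielding the bound $\rc(G) \leq (\gamma_c(G) - 1) + 3 = \gamma_c(G) + 2$.
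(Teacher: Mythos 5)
First, a point of reference: the paper does not prove this statement at all---it is imported verbatim from \cite{Chandran2011}---so there is no in-paper proof to compare against, and I am judging your attempt against the argument in that reference. Your high-level architecture (rainbow-color a spanning tree of $G[D]$ with $\gamma_c(G)-1$ distinct colors, reuse those colors on chords, and reserve three fresh colors for the edges meeting $V\setminus D$) is exactly the architecture of the cited proof, which shows $\rc(G)\le\rc(G[D])+3$ for a connected (two-way) dominating set $D$ and then bounds $\rc(G[D])$ by $|D|-1$.

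However, your concrete attachment scheme has a genuine gap, and it sits exactly in the case you flag. Coloring \emph{every} pivot edge with the same color $\alpha$ is fatal, and the rerouting you sketch does not repair it: if you step from $u$ to an intermediate vertex $w\in V\setminus D$, the only edge from $w$ into $D$ may be $w$'s own pivot edge, which your rule also colors $\alpha$, so the path still collides with the $\alpha$-edge at $v$'s end. (Note also that ``$w$'s pivot differs from $d(v)$'' does not help---different pivot \emph{vertices} still mean identically colored pivot \emph{edges} under your rule.) Concretely, let $D=\{d_1,d_2\}$ with the edge $(d_1,d_2)$, and attach a triangle on $\{u,w,d_1\}$ and a triangle on $\{v,w',d_2\}$. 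Then $\delta(G)=2$ and $\gamma_c(G)=2$, so the claimed bound is $\rc(G)\le 4$, which is attainable. But every $u$--$v$ path must cross the bridge $(d_1,d_2)$, entering $D$ through $(u,d_1)$ or $(w,d_1)$ and leaving through $(v,d_2)$ or $(w',d_2)$; under your scheme all four of these edges are pivot edges colored $\alpha$, so no $u$--$v$ path is rainbow. The missing idea is that the entry colors must be assigned \emph{non-uniformly}: one must guarantee that each vertex of $V\setminus D$ has two short escape routes into $D$ whose color sets can be made disjoint from those of any other vertex's routes (e.g.\ one direct edge colored from $\{\alpha,\beta\}$, chosen differently for ``conflicting'' vertices, plus a second route avoiding that color). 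Arranging this coordination across $V\setminus D$ is the actual content of the lemma in \cite{Chandran2011} and is absent from your proposal.
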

Further, the connected domination number of block graphs has been determined by Chen and Xing~\cite{Chen2004}.
\begin{theorem}[\hspace{1sp}{\cite{Chen2004}}]
Let $G$ be a connected block graph, let $S$ be the set of cut vertices of~$G$, and~$\ell$ the number of blocks in $G$. Then,
\[
\gamma_c(G) = 
\begin{cases}
  1 	& \text{for}\ \ell = 1,	\\
  |S|   & \text{for}\ \ell \geq 2.
\end{cases}
\]
\end{theorem}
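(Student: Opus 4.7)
The plan is to treat the two cases separately. When $l=1$, the graph $G$ consists of a single block and is therefore a clique, so any single vertex is adjacent to all others and trivially induces a connected subgraph; thus $\gamma_c(G)=1$. For $l \geq 2$, I would first note that in a block graph every minimal separator consists of a single cut vertex (as observed before Lemma~\ref{lemma_eq_seps}), so $|S|$ is simply the number of cut vertices of $G$, and it suffices to prove $\gamma_c(G)=|S|$ by matching upper and lower bounds.

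For the upper bound I would exhibit $S$ itself as a connected dominating set. Domination is the easy direction: when $l \geq 2$, each block of the connected graph $G$ must share a vertex with some other block, so every block contains at least one cut vertex; hence every non-cut vertex lies in a common clique with some element of $S$ and is dominated. For connectivity of the induced subgraph $G[S]$, I would work through the block-cut tree $T_{BC}$ of $G$: any two cut vertices $u,v$ are joined in $T_{BC}$ by an alternating path $u,B_1,c_1,B_2,c_2,\ldots,B_k,v$ through blocks and cut vertices, and since each block is a clique, consecutive cut vertices $c_i,c_{i+1}$ along this walk lie in a common block $B_{i+1}$ and are therefore adjacent in $G$. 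Concatenating these edges yields a $u$-$v$ path inside $G[S]$, giving $\gamma_c(G) \leq |S|$.

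For the matching lower bound I would show that every connected dominating set $D$ of $G$ must contain every cut vertex. Suppose toward a contradiction that some cut vertex $c$ is not in $D$. Because $D$ induces a connected subgraph of $G$ avoiding $c$, $D$ lies entirely within a single connected component $H$ of $G-c$. Pick any vertex $v$ in a different component $H'$ of $G-c$; every path in $G$ from $v$ to a vertex outside $H'$ must pass through $c$, so every neighbor of $v$ belongs to $H' \cup \{c\}$, which is disjoint from $D$. Thus $v$ is not dominated, contradicting the assumption on $D$. Hence $|D| \geq |S|$, and combined with the previous paragraph this gives $\gamma_c(G)=|S|$.

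The main subtlety is the connectivity part of the upper bound: we need the induced subgraph on the cut vertices to be connected, not merely the block-cut tree. This transfer step uses the block-graph hypothesis in an essential way, since it is precisely the cliqueness of the blocks that promotes ``two cut vertices in a common block'' into ``an actual edge in $G$'' (a step that would fail for general chordal graphs).
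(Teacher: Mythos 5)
Your argument is correct and complete. Note, however, that the paper does not prove this statement at all: it is imported verbatim from Chen et al.~\cite{Chen2004} as a known result and used as a black box to derive Theorem~\ref{thm_rc_upper_bound}, so there is no in-paper proof to compare against. Your self-contained derivation is the standard one and all the steps check out: the identification of $|S|$ with the set of cut vertices (which the paper itself justifies just before Lemma~\ref{lemma_eq_seps}), domination because every block of a connected graph with $l \geq 2$ blocks contains a cut vertex and blocks are cliques, connectivity of $G[S]$ via the block-cut tree together with the cliqueness of blocks to turn ``consecutive cut vertices share a block'' into an actual edge, and the lower bound from the observation that a connected set avoiding a cut vertex $c$ is trapped in one component of $G-c$ and therefore fails to dominate the other components. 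You are also right to flag the connectivity transfer as the one place where the block-graph hypothesis is essential rather than cosmetic; the lower-bound half holds for arbitrary connected graphs.
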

Combining the two previous theorems, we obtain the following.
\begin{theorem}
\label{thm_rc_upper_bound}
Let $G$ be a connected block graph with at least two blocks and $\delta(G) \geq~2$. Then $\rc(G) \leq |S|+2$, where $S$ is the set of cut vertices of $G$. Moreover, this bound is tight.
\end{theorem}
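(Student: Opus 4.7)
The plan is to derive the bound as a direct consequence of the two theorems cited immediately before the statement, so the work consists mainly of verifying that both sets of hypotheses are satisfied simultaneously. First I would observe that since $G$ is a connected block graph with $l \geq 2$ blocks, the theorem of Chen \emph{et al.} applies and gives $\gamma_c(G) = |S|$, where $S$ is the set of minimal separators of $G$. Second, since the hypothesis $\delta(G) \geq 2$ is explicitly assumed and $G$ is connected, the theorem of Chandran \emph{et al.} applies and yields $\rc(G) \leq \gamma_c(G) + 2$. Chaining these two inequalities gives $\rc(G) \leq |S| + 2$, which is the desired bound.

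There is essentially no obstacle here beyond checking compatibility: one should make sure that the assumption of at least two blocks is indeed the same as $l \geq 2$ in the second theorem (it is, by the definition of $l$ as the number of blocks), and that the minimum degree condition $\delta(G) \geq 2$ is genuinely needed only for invoking the first theorem, not the second. Nothing about block graphs beyond the two cited results enters the argument, so the proof will be only a couple of lines long. The substance of the theorem lies in the two black-box results it combines, and the only novelty is noting that their hypotheses align cleanly on block graphs with minimum degree at least two.
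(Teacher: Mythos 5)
Your proposal is correct and matches the paper exactly: the theorem is obtained there by combining the bound $\rc(G) \leq \gamma_c(G) + 2$ for graphs with $\delta(G) \geq 2$ with the identity $\gamma_c(G) = |S|$ for block graphs with at least two blocks. Your added care in checking that the hypotheses of the two cited results align is the only substance required, and you have it right.
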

Finally, as the number of cut vertices can be determined in linear time, we remark that the upper bound can also be computed in linear time. 

Before proceeding, let us state the following simple but useful lemma.
For this, we recall that a \emph{peripheral vertex} is a vertex of maximum eccentricity, that is, a vertex that is a starting point for some diametral path.
\begin{lemma}
\label{thm_sep_touching_two_pvertices}
Let $G$ be a block graph with at least 3 blocks, and let $x$ and $y$ be two peripheral vertices in distinct blocks. If $G$ has a cut vertex $s$ adjacent to $x$ and $y$, then $\rc(G) > \diam(G)$.
\end{lemma}
\begin{proof}
For the sake of contradiction, assume that $\rc(G) = \diam(G)$. 
Let $B_x$ and $B_y$ be the two distinct blocks $x$ and $y$ are in, respectively. 
Choose a vertex $z \in B_z$ such that $d(x,z) = \diam(G)$, where $B_z$ is a block different from $B_x$ and $B_y$. 
Let $P_{xz}$ be the (unique) shortest $x$-$z$ path.
Since $\rc(G) = \diam(G)$, it must be the case that each edge in $E(P_{xz})$ receives a distinct color in any valid rainbow coloring using $\diam(G)$ colors.
Since $x$ and $z$ are in distinct blocks, and because $s$ is a cut vertex, it is clear that $(x,s) \in E(P_{xz})$.
Without loss, suppose the edge $(x,s)$ was colored with color $c_1$. 
Then consider each uncolored edge incident to $s$ in $B_y$. 
Notice we must color each such edge with color $c_1$, for otherwise $G$ would not be rainbow-connected.
But now the (unique) shortest path $P_{xy}$ repeats the color $c_1$, and thus $x$ and $y$ are not rainbow-connected.
It follows that $\rc(G) > \diam(G)$.~$\square$
\end{proof}

\noindent Figure~\ref{fig:tight_approx_example} (a) illustrates the previous claim: the block graph $G$ has two peripheral vertices adjacent to a cut vertex $s$. Both the edges $(x,s)$ and $(y,s)$ would have to receive the same color in a rainbow coloring of $G$ using $\diam(G)$ colors, but then there is no way to rainbow-connect $x$ and $y$ without introducing new colors. (Here, Figure~\ref{fig:tight_approx_example} (a) also shows the bound from Theorem~\ref{thm_rc_upper_bound} is tight).

\begin{figure}[t]
  \begin{center}
    \subfigure[]{\includegraphics[scale=1,keepaspectratio]{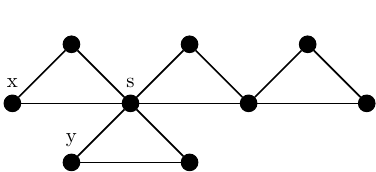}}
    \hfil
    \subfigure[]{\includegraphics[scale=1,keepaspectratio]{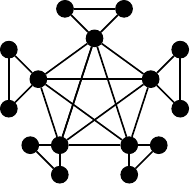}}
    \caption{\textbf{(a)} A block graph $G$ with a cut vertex $s$ adjacent to two peripheral vertices $x$ and $y$ in distinct peripheral blocks. \textbf{(b)} A $K_n$ with $n$ triangles glued to it for $n = 5$.}
    \label{fig:tight_approx_example}
  \end{center}
\end{figure}

We will then characterize the bridgeless block graphs having a rainbow connection number 2, 3, or~4. The following also determines exactly the rainbow connection number of the \emph{windmill graph} $K^{(m)}_n$ ($n > 3$), which consists of $m$ copies of $K_n$ with one vertex in common.

\begin{theorem}
\label{thm_rc_cases}
Let $G$ be a bridgeless block graph. Deciding whether $\rc(G) = k$ is in $\P$ for $k \in \{1,2,3,4\}$.
\end{theorem}
\begin{proof}
As $k \leq 4$, it is enough to consider bridgeless block graphs with diameter $d = \diam(G) \leq 4$. In what follows, we show how such graphs are efficiently and optimally colored.
\begin{itemize}
\item Case $d = 1$. Trivial as $G$ is complete.

\item Case $d = 2$. If $G$ has exactly 2 blocks, it is easy to see that $\rc(G) = 2$. Moreover, if the graph has $\rc(G) = 2$, it must have exactly 2 blocks. Suppose this is was not the case, i.e., $G$ has at least 3 blocks and $\rc(G) = 2$. By an argument similar to Lemma~\ref{thm_sep_touching_two_pvertices}, this leads to a contradiction. Thus, $\rc(G) = 2$ if and only if $G$ has exactly 2 blocks. When $G$ consists of 3 or more blocks, we will show that $\rc(G) = 3$. Let $\mathcal{B}$ be the set of all blocks of $G$, and let $a$ be the unique central vertex of $G$. For each $B \in \mathcal{B}$, color one edge incident to $a$ with color $c_1$, and every other incident edge with color $c_2$. Finally, color every uncolored edge of $G$ with color $c_3$. 
To see $G$ is rainbow-connected, observe there is a rainbow path from any vertex to the central vertex $a$ avoiding a particular color in $\{c_1,c_2,c_3\}$.

\item Case $d = 3$. 
The graph $G$ consists of a unique central clique, and at least 2 other blocks.
If $G$ has 3 blocks, then $\rc(G) = \src(G) = 3$.
If $G$ has 4 blocks, there are two cases: either $G$ has a cut vertex adjacent to two peripheral vertices in distinct blocks (then $\rc(G) \geq 4$ by Lemma~\ref{thm_sep_touching_two_pvertices}) or it does not (then $\rc(G) = \src(G) = 3$).
Otherwise, $G$ has at least 5 blocks.
Now, if $G$ has exactly two cut vertices, then by an argument similar to Lemma~\ref{thm_sep_touching_two_pvertices}, we have that $\rc(G) \geq 4$ as there is a cut vertex that is contained in more than two blocks.
We will then color every block that is not the central clique with 3 colors exactly as in the case $d = 2$, and color every edge of the central clique with a fresh distinct color proving that $\rc(G) = 4$.
Finally, suppose $G$ has at least 5 blocks, more than two cut vertices, but every cut vertex is contained in exactly two blocks.
Then by an argument similar to Lemma~\ref{lem:src_lower_bound}, we have that $\rc(G) \geq 4$, and the described coloring proves $\rc(G) = 4$.

\item Case $d = 4$. Let us call the set of blocks which contain the central vertex $a$ the \emph{core} of the graph $G$. The set of blocks not in the core is the \emph{outer layer}. First, suppose the core contains exactly 2 blocks, and the outer layer at most 4 blocks. Furthermore, suppose the condition of Lemma~\ref{thm_sep_touching_two_pvertices} does not hold (otherwise we would have $\rc(G) > 4$ immediately). When the outer layer contains 2 or 3 blocks, we have that $\rc(G) = \src(G) = 4$. Suppose the outer layer contains exactly 4 blocks. First, consider the case where a core block is adjacent to 3 blocks in the outer layer. Because the condition of Lemma~\ref{thm_sep_touching_two_pvertices} does not hold, it must be the case that at least one of the core blocks is not a $K_3$. Clearly, every two vertices $x$ and $y$, such that $d(x,y) = \diam(G)$, have to be connected by a rainbow shortest path. By an argument similar to Lemma~\ref{lem:src_lower_bound}, we have that $\rc(G) > 4$. Otherwise, when a core block is not adjacent to 3 blocks in the outer layer, $\rc(G) = \src(G) = 4$. Now suppose the outer layer has at least 5 blocks. As above, by an argument similar to Lemma~\ref{lem:src_lower_bound}, we have that $\rc(G) > 4$. Finally, suppose the core has 3 or more blocks. We argue that in this case, $\rc(G) = 4$ if and only if the outer layer contains exactly 2 blocks. For the sake of contradiction, suppose $\rc(G) = 4$, and that the outer layer has 3 or more blocks. If the condition of Lemma~\ref{thm_sep_touching_two_pvertices} holds, we have an immediate contradiction. Otherwise, by an argument similar to Lemma~\ref{thm_sep_touching_two_pvertices}, we arrive at a contradiction. When the outer layer contains exactly 2 blocks, we will show $\rc(G) = 4$. Let $B_1$ and $B_2$ be the blocks in the outer layer. We color every edge of $B_1$ with the color $c_1$, and every edge of $B_2$ with the color $c_4$. Then color $(b_1,a)$ with $c_2$, and $(a,b_2)$ with $c_3$, where $a$ is the central vertex of $G$, and $b_1$ and $b_2$ are the cut vertices in $B_1$ and $B_2$, respectively. For every block $B_i$ in the core, let $Q_i$ denote the set of edges in $B_i$ incident to $a$. Color the uncolored edges of $Q_i$ with either $c_2$ or $c_3$, such that both colors appear at least once in $Q_i$. Then, color every uncolored edge of the block that contains both $a$ and $b_2$ with the color $c_1$. Every other uncolored edge of $G$ receives the color~$c_4$. 
We can now verify $G$ is indeed rainbow-connected under the given coloring.~$\square$
\end{itemize}
\end{proof}
It appears plausible but tedious that one could extend the theorem for larger values of $k$ as well. Thus, it is perhaps the case that deciding whether $\rc(G) = k$ is solvable in polynomial time for any fixed $k$ where~$G$ is a block graph. However, we conjecture the following.
\begin{conjecture}
Given a block graph $G$, it is $\NP$-hard to rainbow color $G$ optimally.
\end{conjecture}
Put differently, the conjecture says the decision problem is $\NP$-complete when the number of colors $k$ is not fixed but part of the input.

Given that the strong rainbow connection number of a block graph $G$ can be efficiently computed, it is interesting to ask when $\rc(G) = \src(G)$, or if the difference between $\src(G)$ and $\rc(G)$ would always be small. Because $\diam(G) \leq \rc(G)$ for any connected graph $G$, the following is an easy observation.
\begin{corollary}
Let $G$ be a block graph, and let $k$ be the number of blocks containing less than 3 cut vertices. If $k = \diam(G)$, then $\rc(G) = \src(G)$.
\end{corollary}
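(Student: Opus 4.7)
The plan is to establish the equality by squeezing $\rc(G)$ between $\diam(G)$ and $\src(G)$ via the standard inequalities, and then invoking the hypothesis together with Theorem~\ref{thm_src_equals_k}. Concretely, I would first recall two well-known facts that hold for every connected graph: on one hand, any rainbow path between two vertices at distance $\diam(G)$ must use at least $\diam(G)$ colors, so $\diam(G) \leq \rc(G)$; on the other hand, any strong rainbow coloring is in particular a rainbow coloring, so $\rc(G) \leq \src(G)$.

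Next, I would apply Theorem~\ref{thm_src_equals_k} to conclude that $\src(G) = k$, since $k$ is precisely the number of clique tree vertices of labeled degree less than $3$. Then the hypothesis $k = \diam(G)$ gives the chain
\begin{equation*}
\diam(G) \leq \rc(G) \leq \src(G) = k = \diam(G),
\end{equation*}
which forces equality throughout. In particular $\rc(G) = \src(G)$, as claimed.

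There is no real obstacle: the content of the statement is entirely absorbed by the determination of $\src(G)$ in the previous section, and the squeezing argument is routine. The only thing worth being careful about is to make sure the inequalities $\diam(G) \leq \rc(G) \leq \src(G)$ are cited (or noted as immediate) rather than left implicit, since the corollary rests on them.
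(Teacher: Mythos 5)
Your proof is correct and matches the paper's own justification: the paper notes that $\diam(G) \leq \rc(G)$ for any connected graph and then relies on exactly the same squeeze $\diam(G) \leq \rc(G) \leq \src(G) = k = \diam(G)$ via Theorem~\ref{thm_src_equals_k}. Nothing further is needed.
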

However, the difference between $\src(G)$ and $\rc(G)$ can be made arbitrarily large: attach $n$ triangles to a $K_n$, one to each vertex of the $K_n$ (see Figure~\ref{fig:tight_approx_example} (b) for an illustration). As $n$ increases, the rainbow connection number remains 4 by Theorem~\ref{thm_rc_cases}, while the strong rainbow connection number increases by Theorem~\ref{thm_src_equals_k}. This example also shows the difference between the upper bound of Theorem~\ref{thm_rc_upper_bound} and $\rc(G)$ can be arbitrarily large.

\section{Concluding remarks}
We studied the complexity of computing the rainbow and strong rainbow connection numbers of subclasses of chordal graphs, namely split graphs and block graphs. In particular, Theorem~\ref{thm_src_chordal_hardness} shows the strong rainbow connection number is significantly harder to approximate than the rainbow connection number, even on very restricted graph classes. Indeed, the result should be contrasted with the fact that any split graph can be rainbow colored in linear time using at most one color more than the optimum~\citep{Chandran2012}.

We believe our results for rainbow and strong rainbow coloring block graphs can serve as a starting point for an even more systematic study of strong rainbow coloring more general graph classes --- a topic which has received quite little attention despite the interest. In fact, the investigation of the strong rainbow connection number has been deemed ``much harder than that of rainbow connection number''~\cite{Li2012} (see also~\cite{Li2012} for more discussion). Given this observation, it is meaningful to consider the strong rainbow connection number of a most general restricted graph class (e.g., block graphs) for which the computation of the number is not known to be $\NP$-complete.

Finally, to avoid confusion, we note that similar problems have been considered in e.g., \citep{Uchizawa2013,Lauri15}: given an edge-colored graph $G$, decide whether $G$ is (strongly) rainbow-connected. We stress that known hardness results for these problems do not imply hardness results for \emph{finding} rainbow colorings. Indeed, the problems are strictly different.

%\acknowledgements
%\label{sec:ack}
%At the end of the manuscript, right before the bibliography you might
%want to place an acknowledgment. This can be easily done by using the 
%command \verb!\acknowledgements! as you can see here.

%\nocite{*}
\bibliographystyle{abbrvnat}

% use the following instead if you encounter problems 
%\bibliographystyle{alpha}
%\bibliographystyle{abbrv}

\bibliography{bibliography}

\begin{thebibliography}{20}
\providecommand{\natexlab}[1]{#1}
\providecommand{\url}[1]{\texttt{#1}}
\expandafter\ifx\csname urlstyle\endcsname\relax
  \providecommand{\doi}[1]{doi: #1}\else
  \providecommand{\doi}{doi: \begingroup \urlstyle{rm}\Url}\fi

\bibitem[Ananth et~al.(2011)Ananth, Nasre, and Sarpatwar]{Ananth2011}
P.~Ananth, M.~Nasre, and K.~K. Sarpatwar.
\newblock Rainbow connectivity: Hardness and tractability.
\newblock In \emph{IARCS Annual Conference on Foundations of Software
  Technology and Theoretical Computer Science (FSTTCS 2011)}, pages 241--251,
  2011.

\bibitem[Basavaraju et~al.(2012)Basavaraju, Chandran, Rajendraprasad, and
  Ramaswamy]{Basavaraju2012}
M.~Basavaraju, L.~Chandran, D.~Rajendraprasad, and A.~Ramaswamy.
\newblock Rainbow connection number and radius.
\newblock \emph{Graphs and Combinatorics}, pages 1--11, 2012.

\bibitem[Blair and Peyton(1993)]{Blair1993}
J.~R.~S. Blair and B.~W. Peyton.
\newblock An introduction to chordal graphs and clique trees.
\newblock In \emph{Graph theory and sparse matrix computation}, pages 1--29.
  Springer, 1993.

\bibitem[Chakraborty et~al.(2009)Chakraborty, Fischer, Matsliah, and
  Yuster]{Chakraborty2009}
S.~Chakraborty, E.~Fischer, A.~Matsliah, and R.~Yuster.
\newblock {Hardness and algorithms for rainbow connection}.
\newblock \emph{Journal of Combinatorial Optimization}, 21\penalty0
  (3):\penalty0 330--347, 2009.

\bibitem[Chandran and Rajendraprasad(2012)]{Chandran2012}
L.~S. Chandran and D.~Rajendraprasad.
\newblock {Rainbow Colouring of Split and Threshold Graphs}.
\newblock In \emph{Proceedings of the 18th Annual International Computing and
  Combinatorics Conference ({COCOON})}, volume 7434 of \emph{Lecture Notes in
  Computer Science}, pages 181--192. Springer, 2012.

\bibitem[Chandran and Rajendraprasad(2013)]{Chandran2013}
L.~S. Chandran and D.~Rajendraprasad.
\newblock Inapproximability of rainbow colouring.
\newblock In \emph{IARCS Annual Conference on Foundations of Software
  Technology and Theoretical Computer Science (FSTTCS 2013)}, pages 153--162,
  2013.

\bibitem[Chandran et~al.(2012)Chandran, Das, Rajendraprasad, and
  Varma]{Chandran2011}
L.~S. Chandran, A.~Das, D.~Rajendraprasad, and N.~M. Varma.
\newblock Rainbow connection number and connected dominating sets.
\newblock \emph{Journal of Graph Theory}, 71\penalty0 (2):\penalty0 206--218,
  2012.
\newblock ISSN 1097-0118.

\bibitem[Chandran et~al.(2017)Chandran, Rajendraprasad, and
  Tesa{\v{r}}]{Chandran2014}
L.~S. Chandran, D.~Rajendraprasad, and M.~Tesa{\v{r}}.
\newblock Rainbow colouring of split graphs.
\newblock \emph{Discrete Applied Mathematics}, 216:\penalty0 98--113, 2017.

\bibitem[Chartrand and Zhang(2008)]{Chartrand2008b}
G.~Chartrand and P.~Zhang.
\newblock \emph{Chromatic graph theory}.
\newblock CRC press, 2008.

\bibitem[Chartrand et~al.(2008)Chartrand, Johns, McKeon, and
  Zhang]{Chartrand2008}
G.~Chartrand, G.~Johns, K.~McKeon, and P.~Zhang.
\newblock {Rainbow connection in graphs}.
\newblock \emph{Mathematica Bohemica}, 133\penalty0 (1), 2008.

\bibitem[Chen et~al.(2004)Chen, Sun, and Xing]{Chen2004}
X.-g. Chen, L.~Sun, and H.-m. Xing.
\newblock Characterization of graphs with equal domination and connected
  domination numbers.
\newblock \emph{Discrete Mathematics}, 289\penalty0 (1--3):\penalty0 129--135,
  2004.

\bibitem[Diestel(2005)]{Diestel2005}
R.~Diestel.
\newblock \emph{Graph Theory}.
\newblock Springer-Verlag Heidelberg, 2005.

\bibitem[Lauri(2016{\natexlab{a}})]{Lauri-phd}
J.~Lauri.
\newblock \emph{Chasing the Rainbow Connection: Hardness, Algorithms, and
  Bounds}.
\newblock PhD thesis, Tampere University of Technology, 2016{\natexlab{a}}.

\bibitem[Lauri(2016{\natexlab{b}})]{Lauri15}
J.~Lauri.
\newblock Further hardness results on rainbow and strong rainbow connectivity.
\newblock \emph{Discrete Applied Mathematics}, 201:\penalty0 191--200,
  2016{\natexlab{b}}.

\bibitem[Le and Tuza(2009)]{Le:tech}
V.~B. Le and Z.~Tuza.
\newblock Finding optimal rainbow connection is hard.
\newblock Technical Report CS-03-09, Universit\"{a}t Rostock, 2009.

\bibitem[Li and Sun(2012)]{Li2012b}
X.~Li and Y.~Sun.
\newblock \emph{Rainbow connections of graphs}.
\newblock Springer, 2012.

\bibitem[Li et~al.(2012)Li, Shi, and Sun]{Li2012}
X.~Li, Y.~Shi, and Y.~Sun.
\newblock {Rainbow Connections of Graphs: A Survey}.
\newblock \emph{Graphs and Combinatorics}, 29\penalty0 (1):\penalty0 1--38,
  Oct. 2012.

\bibitem[Stemple and Watkins(1968)]{Stemple1968}
J.~G. Stemple and M.~E. Watkins.
\newblock On planar geodetic graphs.
\newblock \emph{Journal of Combinatorial Theory}, 4\penalty0 (2):\penalty0
  101--117, 1968.

\bibitem[Uchizawa et~al.(2013)Uchizawa, Aoki, Ito, Suzuki, and
  Zhou]{Uchizawa2013}
K.~Uchizawa, T.~Aoki, T.~Ito, A.~Suzuki, and X.~Zhou.
\newblock {On the Rainbow Connectivity of Graphs: Complexity and FPT
  Algorithms}.
\newblock \emph{Algorithmica}, 67\penalty0 (2):\penalty0 161--179, 2013.

\bibitem[Zuckerman(2007)]{Zuckerman2006}
D.~Zuckerman.
\newblock Linear degree extractors and the inapproximability of max clique and
  chromatic number.
\newblock \emph{Theory of Computing}, 3\penalty0 (6):\penalty0 103--128, 2007.

\end{thebibliography}
\label{sec:biblio}

\end{document}